\newtheorem{theorem}{Theorem}[section]
\newtheorem{lemma}[theorem]{Lemma}
\newtheorem{corollary}[theorem]{Corollary}
\begin{document}

\numberofauthors{2}

\title{\Large \bf Sequential Hypothesis Tests for Adaptive Locality Sensitive Hashing}

\author{
\alignauthor
Aniket Chakrabarti\\
       \affaddr{The Ohio State University}\\
       \email{chakrabarti.14@osu.edu}
\alignauthor
Srinivasan Parthasarathy\\
       \affaddr{The Ohio State University}\\
       \email{srini@cse.ohio-state.edu}
			}

\maketitle

\begin{abstract}
All pairs similarity search is a problem where a set of data
objects is given and the task is to find all pairs of objects
that have similarity above a certain threshold for a given 
similarity measure-of-interest. When the number of points or dimensionality
is high, standard solutions fail to scale gracefully. Approximate solutions
such as Locality Sensitive Hashing (LSH) 
and its Bayesian variants (BayesLSH and BayesLSHLite) alleviate the problem to some extent
and provides substantial speedup over
traditional index based approaches. BayesLSH is used for pruning the
candidate space and computation of approximate similarity, whereas
BayesLSHLite can only prune the candidates, but similarity needs to be
computed exactly on the original data. Thus where ever the explicit
data representation is available and exact similarity computation is
not too expensive, BayesLSHLite can be used to aggressively prune
candidates and provide substantial speedup without losing too much
on quality. However, the loss in quality is higher
in the BayesLSH variant, where explicit data representation
is not available, rather only a hash sketch is available and
similarity has to be estimated approximately.
In this work we revisit the LSH problem from a Frequentist setting and formulate
sequential tests for composite hypothesis (similarity greater than or less than
threshold)
that can be leveraged by such LSH algorithms for adaptively
pruning candidates aggressively. We propose a vanilla sequential probability ration test (SPRT) approach based on this idea and 
two novel variants. We extend these variants to the case where
approximate similarity needs to be computed using fixed-width
sequential confidence interval generation technique.
We compare these novel variants with 
the SPRT variant and BayesLSH/BayesLSHLite variants and show
that they can provide tighter qualitative guarantees over BayesLSH/BayesLSHLite -- a state-of-the-art approach -- 
while being upto 2.1x faster than a traditional SPRT and 8.8x
faster than AllPairs.
\end{abstract}

\section{Introduction}
\label{sec:intro}
Similarity search in a collection of objects has a wide variety
of applications such as clustering~\cite{ravichandran05}, semi-supervised
learning~\cite{zhu09}, information retrieval~\cite{frakes1992information}, query refinement on
websearch~\cite{bayardo07}, near duplicate detection~\cite{xiao11}, collaborative filtering,
and link prediction~\cite{liben-nowell07}. Formally, the problem
statement is:
\textit{ Given a collection of objects D and an associated
similarity measure $s(.,.)$ and a similarity threshold $t$,
 the problem is to find all pairs
of objects $x,y$, such that $s(x,y) \geq t$, where $x,y\in D$}.

The
major challenge in most of these areas is dealing with a large
volume of the data. The volume combined with the high dimensionality
of the datasets can lead to inefficient solutions to this problem.
Recent research has focused
on reducing the candidate search space. The AllPairs~\cite{bayardo07}
candidate generation algorithm builds a smart index
structure from the vector representation of the points to
give the exact similarity. Another approach is to do an approximate
similarity calculation which involves sacrificing small accuracy for substantial speedup. The most popular
technique is the locality sensitive hashing(LSH)~\cite{indyk98,gionis99} which
involves projecting the high dimensional data into a lower
dimensional space and similarity of a pair of points
is approximated by the number of matching
attributes in the low dimensional space. 

A recent idea, BayesLSHLite technique~\cite{satuluri2012bayesian},
further aggressively prunes the candidate space to generate the set of pairs above the
user-defined similarity threshold efficiently. The way BayesLSHLite
works is, for a pair of data objects $x,y$, BayesLSHLite incrementally
compares their hashes (in batches of size $b$) and infers after each
batch, how likely is it that pair will have similarity above the threshold.
As the name suggests it relies on Bayesian principles and priors for making the 
adaptive decision to prune or retain a candidate.
If that probability becomes too low, then candidate pair is pruned away.
The above algorithm has a variant called BayesLSH, where after the candidate
pruning is done, similarity is estimated approximately from the
hash signatures instead of
exact computation. This is useful in cases where exact similarity
computation is infeasible. Exact similarity computation might be
infeasible in cases where the original data was too large too
store and the small hash sketch of the data was stored instead.
Another scenario could be where the similarity measure-of-interest
is a kernel function and exact representation
of the data points in the kernel induced feature space is not possible
as that space might be infinite dimensional (e.g. Gaussian RBF kernel).
Additionally, the specialized kernel functions are extremely expensive to
compute on the fly. In such cases, both candidate generation and
similarity estimation has to be done approximately using the
LSH hash sketches.

In this paper we adapt a Frequentist view of this idea
and propose a fully principled sequential model to do
the incremental pruning task with rigorous quality guarantees.
Specifically, we model the problem of $s(x,y) \geq t$ as a sequential
hypothesis test problem and provide quality guarantees through Type I and
Type II errors. We start with the traditional SPRT~\cite{wald1973sequential}, and show that
it is extremely inefficient in practice, making it unsuitable for
large scale analytics. We then propose an alternate sequential
hypothesis test procedure based on a one-sided fixed-width
confidence limit construction technique.
Finally we show that no single hypothesis testing strategy
works well for all similarity values. Therefore, we propose
a fine-grained hybrid hypothesis testing strategy, which based on a crude
estimate of the similarity from the first batch of hash
comparisons for that specific candidate pair, selects the most suitable test for that pair.
{\it In other words, instead of using a single hypothesis test,
we dynamically choose the best suited hypothesis test for each
candidate pair.}
We extend the above to technique to develop a variant, that
after candidate pruning, estimates the approximate similarity
by using the fixed-width two-sided confidence interval
generation technique~\cite{frey2010fixed}.
Our ideas are simple to implement
and we show that our hybrid method always guarantees the minimum
quality requirement (as specified by parameters input to the algorithm), while being upto 2.1x faster than an
SPRT-based approach and 8.8x faster than AllPairs
 while qualitatively improving on the state-of-the-art BayesLSH/Lite estimates. 

\section{Background}
\subsection{Locality Sensitive Hashing}
Locality sensitive hashing\cite{indyk98,gionis99} is a popular and fast method for 
candidate generation and approximate similarity computation within
 a large high dimensional dataset. Research has demonstrated how to leverage
the key principles for a host of distance and similarity measures~\cite{indyk98,datar04,broder97,ravichandran05,henzinger06}. Briefly, each data point is represented by a set of hash keys using 
a specific hash family for a given distance or similarity measure ($sim$). 
Such a family hash function is said to have 
the locality sensitive hashing property if:
\begin{equation}
P_{h\in F}(h(x) == h(y)) = sim(x,y)
\label{eq:lsh}
\end{equation}
where $x, y$ are the any two points in the dataset, and $h$ is a randomly selected hash function from within a family $F$ of hash functions.
Consequently, the approximate similarity between the pair can be estimated as:
\begin{equation*}
\hat{s(x,y)} = \frac{1}{n} \sum_{i=1}^n I[h_i(x) == h_i(y))]
\end{equation*}
where $k$ is the total number of hash functions.

\subsection{Candidate generation}
We use the AllPairs~\cite{bayardo07} candidate generation algorithm
when the original data set is available. The AllPairs candidate
generation algorithm is exact, hence all true positives (candidates
with similarity above the threshold) will be present in the
set of candidates generated. The AllPairs algorithm builds an
index from vector representation of the data and instead of building
a full inverted index, AllPairs only indexes the information which
may lead to a pair having similarity greater than the specified threshold.
The AllPairs algorithm can be used
when the original dataset is small enough to be stored entirely
and the similarity of interest is computed on the original
feature space (unlike kernel similarity measures).

In cases where the entire dataset cannot be stored,
rather a small sketch of it is available, AllPairs cannot be used.
Additionally if the similarity is a form of kernel measure and
the feature space of that kernel cannot be explicitely represented,
AllPairs will not work as it relies on the explicit representation
of the data points. However, in both scenarios we can use LSH
to generate a low dimensional sketch of the dataset and use
the following probabilistic candidate generation algorithm.
We follow the general index structure for LSH candidate generation used in \cite{indyk98,datar04,broder97,ravichandran05,henzinger06}. The advantage of such an index structure is that the number of candidates to search for a certain point too see which of them has higher similarity than a threshold $t$ becomes much less compared to exhaustive search. That is search can be be done in sublinear time. LSH based index structures perform well compared to traditional indices when the dimensionality of the data set is very high. The algorithm is as follows:
\textit{
\begin{enumerate}
\item
Using a locality sensitive hashing method for a given similarity measure, form $l$ signatures for each data point, each signature having $k$ hash keys.
\item
Each pair of points, that share at least one signature will stay in the same hash bucket.
\item
During all pairs similarity search, for each point, only those points which are in the same bucket needs to be searched.
\item
From \cite{xiao11}, for a given $k$ and similarity threshold $t$, the number of signatures $l$ required for a recall $1-\phi$, 
\begin{displaymath}
l = \lceil\frac{{log(\phi)}}{log(1-t^k)}\rceil
\end{displaymath}
\end{enumerate}
}

\subsection{Candidate Pruning using BayesLSH/Lite}
Traditionally maximum likelihood estimators are used to 
approximate similarity of a pair to decide whether it 
is above or below a certain threshold.
For a candidate pair, if there are a total of $n$ hashes and $m$ of them match, then the similarity estimate is $\frac{m}{n}$. The variance of this 
estimator is $\frac{s(1-s)}{n}$ where $s$ is the similarity between the candiate pair. Two issues to observe here are - 1) as $n$ 
increases, the variance decreases and hence the accuracy of the estimator increases, 2) more importantly, the variance of the estimator depends on the similarity of the pair itself. This means for a fixed number of hashes, the accuracy achieved if the similarity of the candidate pair was 0.9 is higher than if the similarity was 0.5. In other words the number of hashes required for different 
candidate pairs for achieving the same level of accuracy is different. 
Therefore, the problem with fixing the number of hashes is - some of the candidate pairs can be pruned by comparing the first few hashes only. For example if the similarity threshold is 0.9 and 8 out of the first 10 hashes did not match, there is very low probability that the similarity of the pair is greater than 0.9. The BayesLSH/Lite\cite{satuluri2012bayesian} is among the earliest approaches to solve this
problem of deciding the number of hash comparisons. Instead it 
incrementally compares hashes until the candidate pair can be 
pruned with a certain probability, or the maximum allowed hash comparisons
is reached. It will then compute exact or approximate similarity to make the decision.
To do the incremental pruning, BayesLSHLite solves the inference problem
$P(s(x,y) \ge t)$, where $t$ is the similarity threshold.
Additionally, the BayesLSH variant
estimates the approximate similarity by creating an interval
for the true similarity by the solving the inference
$P(|s(x,y) - \hat{s(x,y)|} \leq \delta)$. Both inferences
are solved using a simple Bayesian model.
These inferences help overcome the second problem pointed above - number of hashes required to
prune a candidate or build an interval around it adaptively set for each candidate pair.

Since we are counting the number of matches $m$ out of $n$ hash comparison, and each comparison is independent with match probability $S$ as per equation~\ref{eq:lsh},
the likelihood function becomes a binomial distribution with parameters $n$ and $S$. If $M(m,n)$ is the random variable denoting $m$ matches out of $n$ hash bit comparisons, and let $S$ be the similarity $s(x,y)$, then the likelihood function will be:
\begin{equation}
P(M(m,n)|S) = {n \choose m}S^m(1-S)^{n-m}
\label{eq:likelihood}
\end{equation}

In the Bayesian setting, the parameter $S$ can be treated as a random variable.
Let the estimate of $S$ in this setting be $\hat{S}$.
Using the aforementioned likelihood function, the two inference problems become:

\textbf{Early pruning inference:} given m matches out of n bits, what is the probability that the similarity is above threshold $t$:
\begin{eqnarray}
P[ S \geq t  \, | \,  M(m,n) ] &=& \int_t^1 P(S  \, | \,  M(m,n)) dS 
\label{eqn:pruneProb}
\end{eqnarray}

\textbf{Concentration inference:} given the similarity estimate $\hat{S}$, what is the probability that it falls within $\delta$ of the true similarity:
\begin{eqnarray}
P[ |S - \hat{S}| < \delta \, | \, M(m,n) ] &= P[ \hat{S}-\delta <
S < \hat{S}+\delta  \, | \,  M(m,n) ] \nonumber \\
&= \int_{\hat{S}-\delta}^{\hat{S}+\delta} P(S  \, | \,  M(m,n)) dS 
\label{eqn:concentrateProb}
\end{eqnarray}

The BayesLSHLite algorithm works as follows:

\noindent \textit{For each pair x,y:
\begin{enumerate}
\item Compare the next $b$ hashes and compute the early pruning probability 
$P[ S \geq t  \, | \,  M(m,n) ]$.
\item If $P[ S \geq t  \, | \,  M(m,n) ] < \alpha$, then prune the pair and stop.
\item If maximum allowable hash comparisons have been reached,
compute exact similarity and stop.
\item Go to step 1.
\end{enumerate}
}

The BayesLSH variant works as follows:

\noindent \textit{For each pair x,y:
\begin{enumerate}
\item Compare the next $b$ hashes and compute the early pruning probability 
$P[ S \geq t  \, | \,  M(m,n) ]$.
\item If $P[ S \geq t  \, | \,  M(m,n) ] < \alpha$, then prune the pair and stop.
\item If $P[ |S - \hat{S}| < \delta \, | \, M(m,n) ] > 1 - \gamma$, then output pair $x,y$
if $\hat{S} \geq t$ and stop.
\item If maximum allowable hash comparisons have been reached,
then output pair $x,y$
if $\hat{S} \geq t$ and stop.
\item Go to step 1.
\end{enumerate}
}

\section{Case for Frequentist Formulation}
\label{sec:motivation}
The BayesLSHLite candidate pruning algorithm
and the BayesLSH approximate similarity estimation
algorithm
 as described in the previous section, provides the basis for the current 
work. Specifically in this work we examine the same problems they attempt
to solve but in a Frequentist setting. We note that the
inferences (equations~\ref{eqn:pruneProb} and~\ref{eqn:concentrateProb}) 
in the above BayesLSH/Lite algorithms
is done every $b$ hash comparisons (this can be viewed as a 
bin of comparisons). Therefore, for a 
candidate pair, if the pruning inference is done once,
then the error probability rate will $\alpha$, but when
it is done for the second time, probability of the pair
getting pruned will be determined by getting pruned the
first time (first bin) and the probability of getting pruned the
second time (a cumulative of the first and second bin matches). 
Essentially, we argue in this
work that this error rate may propogate 
resulting in an accumulated error over multiple pruning inferences.
{\it The underlying reason is, BayesLSHLite tries to model an inherently 
sequential decision process in a non-sequential way.}
The same scenario is true for the concentration inference
as well (equation~\ref{eqn:concentrateProb}). Over multiple
concentration inferences done incrementally, the coverage
probability could fall below $1-\gamma$.
We note that in practice
this may not be a significant issue but the question remains can this
problem be fixed (in the rare cases it may materialize)
without significantly impacting the gains
obtained by BayesLSH/Lite.  We note that fixing this problem in a Bayesian
setting remains open but in this work we show how this problem can
be fixed in a Frequentist setting.

Another issue, again a minor one, is
when a pair is unlikely to be above a certain
similarity threshold, pruning it early saves hash comparisons,
similarly when a pair is very likely to be above the threshold,
hash comparison for it should stop immediately and it should be
processed for exact similarity computation. This can also
save a number of hash comparisons. 

To overcome these 
problems, we propose to model the problem in a Frequentist setting as follows.
In the frequentist setting let the similarity $s(x,y)$ be denoted by the
parameter $s$ (intead of S as in Bayesian setting).
\begin{itemize}
\item
We model the early pruning inference
$s > t$ as a {\it sequential
hypothesis test} problem that should be able to guarantee Type I and Type II
errors under the sequential hash comparison setting and if possible,
it should be able to early prune a pair or send a pair for
exact similarity computation. 
\item
We model the concentration inference $|s-\hat{s}| \leq \delta$ as a
{\it sequential two-sided fixed-width confidence interval} creation
problem that should be able to guarantee a certain coverage
probability.
\end{itemize}

\section{Methodology}
\label{sec:mlest}
In this section, we
describe a principled way of doing the early pruning
inference (equation~\ref{eqn:pruneProb}) and a principled
way of doing the concentration inference (equation~\ref{eqn:concentrateProb})
under the sequential setting where the number of hash functions ($n$)
is not fixed, rather it is also a random variable.
\subsection{Early Pruning Inference}
We use sequential tests of
composite hypothesis for pruning the 
number of the generated candidates, so that the cardinality
of the remaining candidate set is very small. Therefore,
exact similarity computation on the remaining set of candidate
pairs becomes feasible in terms of execution time, provided the
original data set is available and the similarity function can be
computed on the feature space of the original data. Our pruning
algorithm involves sequentially comparing the hashes for a 
pair of data objects and stop when we are able to infer 
with some certainty whether the similarity for the
pair is above or below the user defined threshold. If, according
to the inference,
the similarity
of the pair is below the threshold, then we prune away the pair,
otherwise we compute exact or approximate similarity of the pair
depending on which variant we are using. More formally,
if the similarity of the pair is $s$ and the user defined
threshold is $t$, we need to solve the hypothesis
test, where the null hypothesis is $H0:s\geq t$ and the alternate
hypothesis is $H1:s < t$. Two interesting aspects of our
problem formulation are:
\begin{enumerate}
\item For performance reasons as described in section~\ref{sec:motivation},
we do not want to fix the number of hashes to compare for the
hypothesis test, but rather incrementally compare the hashes
and stop when a certain accuracy in terms of Type I error
has been achieved. 
\item We focus on Type I error, i.e. we do not
want to prune away candidate pairs which are true positives
($s\geq t$). We can allow false positives ($s<t$) in our
final set, as either exact similarity computation or approximate
similarity estimation will be done on the
final set of candidate pairs and any false positives can
thus be pruned away. In other words, we do not need to
provide guarantees on Type II error of the hypothesis test.
Ofcourse keeping a low Type II error implies less false
positives to process, resulting in better performance.
\end{enumerate}

We discuss three strategies for formulating the hypothesis test.
First, we cast our problem in a traditional Sequential Probability
Ratio Test (SPRT)~\cite{wald1973sequential} setting and then discuss
the shortcomings of such an approach. Second, we then
develop a sequential hypothesis test based on
a sequential fixed width one-sided confidence interval (CI) and show
how it can overcome some of the limitations of 
traditional SPRT. We empirically find that even this test does not
always perform better than the more traditional SPRT.
Third, building on the above, we propose a hybrid approach (HYB), where we dynamically
select the hypothesis test (SPRT or CI) based on the similarity of each
candidate pair which we crudely estimate from the first few
comparisons. In other words, {\it instead of using a single
fixed hypothesis test, we select one which is best suited for the
candidate pair being estimated.}

For a candidate pair $x,y$ with similarity $s$, the
probability of a hash matching is $s$ for a locality
sensitive hash function as described in equation~\ref{eq:lsh}.
Therefore, given $n$ hashes for the pair, the probability
that $m$ of them will match follows a binomial distribution
with parameters $n$ and $s$. This is because the individual
hash matching probabilities are identically and independently distributed 
Bernoulli with parameter $s$. So our problem formulation
reduces to doing sequential hypothesis test on a binomial
parameter $s$. 

\subsubsection{Sequential Probability Ratio Test}
We use the traditional sequential probability ratio test
by Wald~\cite{wald1973sequential} as our first principled sequential model
for matching LSH signatures, to decide between $s\geq t$ or $s<t$. For the
purpose of this model we swap the null and alternate
hypotheses of our formulation. We do this because the
resulting formulation of the hypothesis test $H0:s<t$
vs. $H1:s\geq t$, where $s$ is a binomial parameter, has a well known 
textbook solution (due to Wald). 
The important thing
to recollect is that we care more about Type I error
in our original formulation.  Therefore under the swapped SPRT setting,
we care about the Type II error. That is
easily done as SPRT allows the user to set both Type I
and Type II errors, and we set Type II error to be
$\alpha$. To solve a composite hypothesis test using SPRT
for a binomial parameter $s$, the first step is to
choose two points $t+\tau$ and $t-\tau$. Now
the SPRT becomes a simple hypothesis test problem
of $H0:s=s_0=t-\tau$ vs. $H1:s=s_1=t+\tau$. The algorithm
 works as follows:
\begin{enumerate}
\item Incrementally compare batches of size $b$
hashes until
\begin{equation*}
\begin{split}
&\frac  {log(\frac{\alpha}{1-\beta})}{log(\frac{s_1}{s_0})-log(\frac{1-s_1}{1-s_0})}
       +n \frac{log(\frac{1-s_0}{1-s_1})}{log(\frac{s_1}{s_0})-log(\frac{1-s_1}{1-s_0})}
			< \hat{s} \\
			&\ \ \ \ < \frac{log(\frac{1-\alpha}{\beta})}{log(\frac{s_1}{s_0})-log(\frac{1-s_1}{1-s_0})}
			+n \frac{log(\frac{1-s_0}{1-s_1})}{log(\frac{s_1}{s_0})-log(\frac{1-s_1}{1-s_0})}
\end{split}
\end{equation*}
Here $n$ is the cumulative number of hash comparisons till now,
and $\hat{s} = m/n$, where $m$ is the cumulative number of hash matches up to that point.
\item Reject null hypothesis (conclude $s\geq t$) if,
\begin{equation*}
\hat{s} \geq \frac{log(\frac{1-\alpha}{\beta})}{log(\frac{s_1}{s_0})-log(\frac{1-s_1}{1-s_0})}
			+n \frac{log(\frac{1-s_0}{1-s_1})}{log(\frac{s_1}{s_0})-log(\frac{1-s_1}{1-s_0})}
\end{equation*}
\item Fail to reject null hypothesis (conclude $s<t$) if,
\begin{equation*}
\hat{s} \leq \frac  {log(\frac{\alpha}{1-\beta})}{log(\frac{s_1}{s_0})-log(\frac{1-s_1}{1-s_0})}
       +n \frac{log(\frac{1-s_0}{1-s_1})}{log(\frac{s_1}{s_0})-log(\frac{1-s_1}{1-s_0})}
\end{equation*}
\end{enumerate}

SPRT is a cumulative likelihood ratio test, and is an optimal
test with guaranteed Type I and Type II errors, when the
hyotheses are simple. In the case of composite hypothesis (across bins of hashes),
no optimality guarantees can be given, and consequently,
to make a decision, SPRT typically takes a large number
of hash comparisons. This results in extremely slow performance
as we will empirically validate. We next describe the confidence interval based test.
 
\subsubsection{One-Sided-CI Sequential Hypothesis Test}
\label{sec:one-ci-sec}
\paragraph{Constructing the confidence interval (CI)}
The true similarity $s$ of pair of data objects $x,y$ can
be estimated as $\hat{s} = \frac{m}{n}$, where $m$ is the
number of hashes that matched out of $n$ hash comparisons.
It can be shown that $\hat{s}$ is the maximum likelihood
estimate of $s$~\cite{rice2007mathematical}. Following standard convention
we denote this estimator as $\hat{S}$ (random variable, distinguished
from its realization, $\hat{s}$).
Here we describe the procedure for constructing a fixed-width
(say $w$) upper confidence interval for $s$ with $1-\alpha$
 coverage probability. More formally, we want to continue
comparing hashes and estimating similarity until,
\begin{equation}
P(s < \hat{S} + w) = 1- \alpha
\label{eq:upper-limit}
\end{equation}
Here $\hat{s} + w$
is the upper confidence limit for $s$ with $1-\alpha$ coverage probability.
We use an approach similar to Frey~\cite{frey2010fixed} to solve this
problem.

\smallskip
\noindent \textbf{Stopping rule for incremental hash comparisons:}
We use the Wald confidence interval for binomial as our
stopping rule. Formally, for some value $\lambda$, and
a fixed confidence width $w$, we incrementally
compare batches of $b$ hashes and stop when $z_{\lambda} \sqrt\frac{\hat{s_a}(1-\hat{s_a})}{n} \leq w$.
Then the upper confidence limit can be reported as $min(\hat{s}+w, 1.0)$. Here $\hat{s_a} = \frac{m+a}{n+2a}$,
where $a$ is a very small number.  $\hat{s_a}$ is used instead of $\hat{s}$ because, if the batch size is extremely small,
and number of matches is 0 (or it is the maximum, i.e., all match), then the confidence width
becomes 0 after the first batch if $\hat{s}$ is used. 

\smallskip
\noindent \textbf{Finding $\lambda$:} In a
non-sequential setting, the Wald upper confidence
limit as described above will have a coverage
probability of $1-\lambda$. But in
a sequential setting, where the confidence interval
is tested after every batch of hash comparisons,
the coverage probability could fall below $1-\lambda$.
Hence to ensure coverage probability of at least
$1-\alpha$, $\lambda$ should be set less than $\alpha$.
Given the set of stopping points and a $\lambda$,
we can compute the coverage probability CP($\lambda$)
of our one-sided confidence interval using the
path-counting technique~\cite{girshick2006unbiased}. Suppose the stopping points
are $(m_1,n_1),(m_2,n_2),.....,(m_k,n_k)$ and
$H(m,n)$ is the number of ways to do $n$ hash
comparisons with $m$ matches, without hitting
any of the stopping points. Therefore, the probability
of stopping at stopping point $(m_i,n_i)$ is
$H(m_i,n_i)s^{m_i}(1-s)^{n_i-m_i}$. Since the
incremental hash comparison process is guaranteed
to stop, probability of stopping at all the
stopping points should sum to 1. This implies
\begin{eqnarray*}
\sum\limits_{i=1}^k H(m_i,n_i)s^{m_i}(1-s)^{n_i-m_i} = 1
\end{eqnarray*}
Consequently, the
coverage probability for similarity $s$ will be
\begin{equation}
T(s,\lambda) = \sum\limits_{i=1}^k H(m_i,n_i)s^{m_i}(1-s)^{n_i-m_i} I(s \leq \frac{m_i}{n_i}+w)\\
\label{eq:cov-prob}
\end{equation}
Here $I$ is the indicator function. Now the coverage probability can be computed as,
\begin{equation}
CP(\lambda) = min_{s \in [0,1]} T(s,\lambda)
\label{eq:confidence-coef}
\end{equation}
For our one-sided confidence interval to have
at least $1-\alpha$ coverage probability, we
need to find $\lambda$ such that $CP(\lambda) \geq 1-\alpha$.
The function $H(m,n)$ can be solved using the path-counting
recurrence relation:
\begin{equation*}
\begin{split}
H(m,n+1) = &H(m,n)  ST(m,n) \\
&+ H(m-1,n) ST(m-1,n)
\end{split}
\end{equation*}
Here $ST(m,n)$ is the indicator function of whether
$(m,n)$ is a stopping point. The base of the
recursion is $H(0,1) = H(1,1) = 1$.
With a fixed $\lambda$, $ST(m,n)$ can be computed
using the Wald stopping rule as described before,
and $H(m,n)$ can be hence computed by the aforementioned recurrence
relation. Then we need to solve equation~\ref{eq:confidence-coef}
to find out the confidence coefficient of our one-sided
interval. 
$T(s,\lambda)$ is a piecewise polynomial in $s$ with
jumps at the points in the set $C=\{0,\frac{m_i}{n_i}+w, \forall i=1\ to\ k\ and\ \frac{m_i}{n_i}+w \leq 1\}$.
$CP(\lambda)$ is then approximated numerically by
setting $s=c\pm 10^{-10},$ where $c\in C$ and taking
the minimum resulting $T(s,\lambda)$.
Now that we know how to compute $CP(\lambda)$, we use bisection root-finding
algorithm to find a $\lambda$ for which $CP(\lambda)$
is closest to our desired coverage probability $1-\alpha$.

\paragraph{Constructing the Hypothesis Test}
In the previous section we described a procedure
for creating a fixed-width once-sided sequential
upper confidence limit with coverage probability
$1-\alpha$. In this section, we describe
the process to convert the one-sided upper
confidence interval to a level-$\alpha$
hypothesis test using the duality of confidence
intervals and hypothesis tests.
\begin{lemma}
\label{level-alpha}
If $\hat{s}+w$ be an upper confidence limit for $s$
with coverage probability $1-\alpha$, then a $level-\alpha$
hypothesis test for null hypothesis $H0:s\geq t$ against
alternate hypothesis $H1:s < t$ will be
Reject H0, if $\hat{s}+w < t$,
else Fail to Reject H0. 
\end{lemma}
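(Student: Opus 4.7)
The plan is to invoke the standard confidence interval / hypothesis test duality: rejecting $H_0$ precisely when the hypothesized parameter value $t$ lies outside the one-sided interval $(-\infty,\hat{s}+w)$ produced by the procedure of Section~\ref{sec:one-ci-sec}. I would first state explicitly what the coverage guarantee gives us, namely that by equation~\ref{eq:upper-limit}, $P(s \geq \hat{S}+w) \leq \alpha$ for the true similarity $s$, and then compute the Type I error of the proposed test directly.

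Concretely, I would proceed as follows. Assume $H_0$ holds, i.e.\ $s \geq t$. The test rejects when the observed upper limit satisfies $\hat{s}+w < t$. Under $H_0$, any realization with $\hat{s}+w < t$ also satisfies $\hat{s}+w < t \leq s$, so the event $\{\hat{S}+w < t\}$ is contained in the event $\{\hat{S}+w < s\}$, i.e.\ the event that the upper confidence limit fails to cover $s$. Taking probabilities,
\begin{equation*}
P_s(\text{Reject } H_0) \;=\; P_s(\hat{S}+w < t) \;\leq\; P_s(\hat{S}+w < s) \;\leq\; \alpha,
\end{equation*}
where the last inequality is precisely the $1-\alpha$ coverage of the one-sided interval constructed in the previous section. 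Since this bound holds for every $s \geq t$, taking the supremum over the null parameter space gives $\sup_{s \geq t} P_s(\text{Reject } H_0) \leq \alpha$, which is exactly the definition of a level-$\alpha$ test.

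There is essentially no hard step: the proof is a one-line containment of events followed by an application of the coverage property. The only thing to be careful about is the direction of the inequality in equation~\ref{eq:upper-limit} (strict versus non-strict) and making sure the ``coverage probability $1-\alpha$'' in the lemma's hypothesis is interpreted as $P(s < \hat{S}+w) \geq 1-\alpha$ rather than exact equality, so that the resulting bound on Type I error is genuinely $\leq \alpha$ and not just approximately so. I would include a one-sentence remark noting that this argument only uses the coverage guarantee and does not depend on the specific Wald-based stopping rule or the numerical choice of $\lambda$, so the lemma applies to any valid one-sided upper confidence limit produced by the procedure of Section~\ref{sec:one-ci-sec}.
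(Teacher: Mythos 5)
Your proposal is correct and follows essentially the same route as the paper: the paper's chain of implications from equation~\ref{eq:upper-limit} is exactly your event containment $\{\hat{S}+w < t\} \subseteq \{\hat{S}+w < s\}$ for $s \geq t$, written in complementary form with conditional-probability notation. Your version is if anything slightly cleaner in writing $P_s(\cdot)$ for a fixed parameter and taking the supremum over the null parameter space, but the underlying argument is identical.
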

\begin{proof}
By equation~\ref{eq:upper-limit},
\begin{flalign*}
&P(\hat{S}+w \geq s) = 1-\alpha&\\
&\implies P(\hat{S}+w \geq t | s \geq t) \geq 1-\alpha\\
&\implies -P(\hat{S}+w \geq t | s \geq t) \leq -1 + \alpha\\
&\implies 1 - P(\hat{S}+w \geq t | s \geq t) \leq \alpha\\
&\implies P(\hat{S}+w < t | s \geq t) \leq \alpha\\
&\implies P (Reject H0 | H0) \leq \alpha \\
\end{flalign*}
\end{proof}

\paragraph{Choosing $w$}
The fixed-width $w$ of the one-sided upper confidence
interval has a significant effect on the efficiency of the
test. Intuitively, the larger the width $w$, the less
the number of hash comparisons required to attain
a confidence interval of length $w$. However, setting
$w$ to a very high value would result in a large Type II
error for our test. Though our algorithm's quality
is not affected by Type II error (since we compute
exact or approximate similarity when alternate hypothesis is satisfied),
but still a large Type II error will imply that
many false positives (candidates which fall in alternate
hypothesis, but are classfied as null hypothesis).
Exact similarity is computed or approximate
similarity is estimated and these candidates are
pruned away. Therefore, a large Type II error will
translate to lower efficiency. In other words,
making $w$ too high or too low will cause significant
slowdown in terms of execution time.

We next describe a simple heuristic to select $w$.
Suppose a candidate pair has similarity $s < t$,
i.e. for this candidate pair, the null hypothesis
should be rejected. So the upper confidence limit
$\hat{s}+w$ can be as high as $t$, and our
test statistic should still be able to reject it. Therefore, the maximum
length of 
$w$ is dictated by how large the upper confidence
limit can be. So instead of preseting $w$ to a fixed
value, we dynamically set $w$ according to the following
heuristic. We compare the first batch of hashes and
use the crude estimate of $s$, say $\hat{si}$ from the
first batch to come up with $w$:
\begin{equation}
w = t-\hat{si} - \epsilon
\end{equation}
The key insight here is, \textit{instead of using
a single hypothesis test for all candidate pairs,
we choose a different hypothesis test based on
an initial crude similarity estimate of the candidate
pair being analyzed, so that $w$ can be maximized,
while still keeping Type II error in control, resulting
in efficient pruning.} Note that every such test is
a $level-\alpha$ test according to Lemma~\ref{level-alpha}.
We need the $\epsilon$ parameter as $\hat{si}$ is a 
crude estimate from the first batch of hash comparisons
and it could be an underestimate, which would result in
an overestimate of $w$. Consequently, the final
test statistic $\hat{s}+w$ can go beyond $t$ and the
candidate cannot be pruned. Figure~\ref{fig:crude-estimate}
explains the phenomenon. Ofcourse, dynamically
constructing the test for each candidate can make
the candidate pruning process inefficient. We solve this issue
by caching a number of tests for different $w$ and during
the candidate pruning step, the test that is closest to $w$ (but smaller
than or equal to it)
is selected. Hence there is no need for online inference,
making the algorithm very efficient.

\begin{figure}
\begin{center}
\includegraphics[width=200pt,height=100pt]{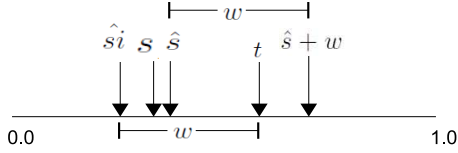}
\end{center}
\caption{Estimating w}
\label{fig:crude-estimate}
\end{figure}

\subsubsection{Hybrid Hypothesis Tests}
We found out empirically that the number of hash comparisons
required by SPRT is very high for our composite hypothesis
test problem. The one-sided CI based tests performed
considerably better. Specifically we saw that the candidate
pairs whose actual similarity $s$ is quite far away from the
threshold $t$ were very quickly classified into the null or
alternate hypothesis and the width $w$ is quite large.
But interestingly, the candidate pairs which have similarity
very close to the threshold, the estimated parameter $w$
becomes very small. For such pairs, to attain the fixed-width
confidence interval, the number of hash comparison requirement
is very high. It is even higher than the more traditional
SPRT. Therefore, to utilize the best of both worlds, we use
a hybrid hypothesis testing strategy, where based on how far the
true similarity is away from the threshold, we either select
a one-sided CI based hypothesis test, or the SPRT. Formally,
\textit{we use a parameter $\mu$, such that if
the estimated fixed-width $w \geq \mu$, we use the
one-sided CI based hypothesis test, else we use SPRT.}. Again,
in this hybrid strategy all the tests are $level-\alpha$, so
we have guarantees on the overall Type I error, while
minimizing the number of overall hash comparisons by smarty
selecting the proper test for a specific candidate.

\subsection{Concentration Inference}
\label{sec:two-sided-ci}
To solve the concentration inference of equation~\ref{eqn:concentrateProb},
we can 
create a two-sided fixed-width confidence interval for a 
binomial proportion under the sequential setting. The technique is
very similar to the one we described in section~\ref{sec:one-ci-sec}
for the one-sided upper confidence limit. The major difference
is, for two-sided confidence interval, the coverage probability
equation~\ref{eq:cov-prob} from section~\ref{sec:one-ci-sec} becomes:
\begin{equation*}
T(s,\lambda) = \sum\limits_{i=1}^k H(m_i,n_i)s^{m_i}(1-s)^{n_i-m_i} I(|s - \frac{m_i}{n_i}| \leq \delta)\\
\end{equation*}
Now this equation can be solved in a manner similar to the one
described in section~\ref{sec:one-ci-sec} to find out the
critical value $\lambda$ and hence the stopping points
in the sequential process. The stopping rule will also
change to $z_{\frac{\lambda}{2}} \sqrt\frac{\hat{s_a}(1-\hat{s_a})}{n} \leq \delta$
(in the one-sided case $\lambda$ was used instead of $\frac{\lambda}{2}$).
The concentration inference is
used to estimate the similarity with probabilistic
guarantees under circumstances where exact similarity
computation is infeasible.

\noindent \textbf{Choosing maximum number of hashes:}
In our problem scenario, we do not need all candidates
to converge to a fixed-width interval. Since we guarantee $1-\alpha$
recall, any candidate pair which has less than $\alpha$ probability
of being greater than $t$ can be ignored. In other words,
we do not need to consider all the stopping
points generated. We can choose the stopping points based
on the user defined threshold $t$.
\begin{lemma}
\label{stopping-points}
If $m_i,n_i$  are the stopping points decided by the fixed-width
confidence interval method having coverage probability $\gamma$, 
only the stopping points $m_i,n_i$,
such that $\frac{m_i}{n_i} < t - \delta$ will have probability
$\gamma$ of being greater than threshold $t$.
\end{lemma}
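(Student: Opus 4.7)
The lemma reduces to a direct application of the defining coverage property of the two-sided fixed-width procedure. By construction, whenever stopping occurs at $(m_i,n_i)$ with point estimate $\hat{s}_i = m_i/n_i$, the random interval $[\hat{S}-\delta,\hat{S}+\delta]$ covers the true $s$ with probability at least $\gamma$, which I would record at the outset as $P(|s-\hat{s}_i|\le \delta)\ge \gamma$.

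The forward direction is the main step. Assume $\hat{s}_i < t-\delta$; then $\hat{s}_i+\delta < t$, so the event $\{|s-\hat{s}_i|\le\delta\}$ is contained in $\{s\le\hat{s}_i+\delta\}\subseteq\{s<t\}$. Hence $P(s<t)\ge\gamma$, i.e.\ $P(s\ge t)\le 1-\gamma$. Under the Section~\ref{sec:two-sided-ci} recall budget $1-\gamma\le\alpha$, this is exactly the certificate that such a stopping point can be dropped without jeopardizing the target recall of $1-\alpha$.

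For the ``only'' half I would argue the contrapositive: if $\hat{s}_i\ge t-\delta$ then $\hat{s}_i+\delta\ge t$, so $[\hat{s}_i-\delta,\hat{s}_i+\delta]$ intersects $[t,1]$, and the event $\{|s-\hat{s}_i|\le\delta\}$ is consistent with $s\ge t$. The $\gamma$-coverage property therefore cannot on its own force $P(s\ge t)\le 1-\gamma$, so such stopping points cannot be pre-emptively discarded by the recall argument and must be retained.

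The real obstacle here is interpretive rather than computational. The phrase ``will have probability $\gamma$ of being greater than threshold $t$'' is most naturally read as the bound $P(s\ge t)\le 1-\gamma$ (equivalently, a $\gamma$-certificate that $s<t$), which is precisely what the two-sided CI produces; once that rephrasing is in place, the proof collapses to the two inclusion arguments above, and the remaining care is just in matching the direction of the coverage inequality to the side of $t$ on which the interval falls.
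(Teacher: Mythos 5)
Your argument is essentially the paper's own proof: both start from the two-sided coverage guarantee $P(\hat{S}-\delta \le s \le \hat{S}+\delta)$ and observe that when $\hat{s}_i + \delta < t$ the coverage event is contained in $\{s < t\}$, so the probability of $s \ge t$ is bounded by the non-coverage probability; the ``only'' direction you add is not in the paper and is not needed for how the lemma is used. The one discrepancy is conventional rather than logical: the paper (despite the wording of the lemma statement) takes the coverage probability to be $1-\gamma$, so its conclusion is $P(s > t) \le \gamma$ and it later sets $\gamma = \alpha$, whereas your reading (coverage $\ge \gamma$) yields $P(s \ge t) \le 1-\gamma$ and the condition $1-\gamma \le \alpha$ --- internally consistent, but with $\gamma$ and $1-\gamma$ swapped relative to the rest of the paper, which matters when matching the experimental setting $\gamma = \alpha = 0.03$.
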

\begin{proof}
By the fixed-width confidence interval guarantee, 
\begin{flalign*}
&P(\hat{S}-\delta \leq s \leq \hat{S}+\delta ) = 1 - \gamma& \\
&\implies P(s\geq \hat{S}+\delta\ or\ s\leq\hat{S}-\delta) = \gamma\\
&\implies P(s \geq \hat{S}+\delta) \leq \gamma
\end{flalign*}
This implies for any $\hat{s} < t - \delta$, $s>t$ will have coverage
probability less than $\gamma$. For all the stopping points $m_i,n_i$,
$\hat{s} = \frac{m_i}{n_i}$. Hence the proof.
\end{proof}
\begin{corollary}
If $m_i,n_i$ are the set of stopping points, the maximum number of
hashes $n_{max}$ required by our algorithm to estimate any similarity
above $t$ is $n_{max} = max(n_i)$ s.t. $\frac{m_i}{n_i} \geq t - \delta$.
\end{corollary}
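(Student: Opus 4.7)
The plan is to read the corollary as an immediate consequence of Lemma~\ref{stopping-points} combined with the $1-\alpha$ recall policy stated just before the lemma. First, I would split the set of stopping points $\{(m_i,n_i)\}$ into two groups according to whether $\frac{m_i}{n_i} < t-\delta$ or $\frac{m_i}{n_i} \geq t-\delta$. Lemma~\ref{stopping-points} tells us that the first group consists precisely of those stopping points at which the event $s > t$ has coverage probability at most $\gamma$; these are exactly the points at which our algorithm is allowed to prune without compromising the $1-\alpha$ recall guarantee.

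Next, I would argue that for any candidate pair we actually need to estimate as having similarity above $t$, the trajectory of $(m,n)$ through the sampling process must terminate at a stopping point of the second group. Any stopping point of the first group would, under our pruning policy, cause the candidate to be discarded rather than estimated. Therefore, in the worst case the algorithm only needs to keep comparing hashes until it reaches some $(m_i,n_i)$ with $\frac{m_i}{n_i} \geq t-\delta$, and after $\max\{n_i : \frac{m_i}{n_i} \geq t-\delta\}$ hashes, every surviving trajectory must already have hit such a stopping point.

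Putting these two observations together yields $n_{max} = \max\{n_i : \frac{m_i}{n_i} \geq t-\delta\}$, which is exactly the claim. There is no real obstacle here, since the corollary is essentially bookkeeping on top of the lemma; the only point to be careful about is making the pruning policy explicit, so that one sees why stopping points of the first group never contribute to $n_{max}$ when the goal is to estimate similarities above $t$.
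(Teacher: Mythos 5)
Your argument is correct and matches the paper's intent: the paper offers no separate proof of this corollary, treating it exactly as you do -- an immediate piece of bookkeeping on top of Lemma~\ref{stopping-points}, where stopping points with $\frac{m_i}{n_i} < t-\delta$ are prunable under the $1-\alpha$ recall policy (with $\gamma=\alpha$) and hence only the remaining stopping points can terminate a retained candidate. Your explicit split into the two groups and the observation that every surviving trajectory must terminate at a point of the second group is precisely the implicit reasoning in the text.
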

If we set $\gamma = \alpha$, the above lemma will ignore points
which have less than $\alpha$ probability of being a true positive.
In otherwords, our algorithm is able to guarantee $1-\alpha$ recall.

\subsection{Similarity Measures}
The proposed techniques in this paper can be used
for the set similarity measures for which a locality
sensitive hash function exists. Previous work has
developed locality sensitive hash functions for a
wide range of similarity measures~\cite{indyk98,datar04,broder97,ravichandran05,henzinger06}
as well as for arbitrary kernel functions~\cite{kulis2012kernelized}.
In this paper, we show that our methods work well with
two of the most popular similarity measures - i) Jaccard similarity
and ii) Cosine similarity.

\subsubsection{Jaccard Similarity}
The
locality sensitive hash function relevant to Jaccard similarity is \textit{MinWise Independent
Permutation}, developed by Broder \textit{et al}~\cite{broder97}. This hash function
can approximate the Jaccard coefficient between two sets $x,y$. Formally,
\begin{equation*}
P(h(x) == h(y)) = \frac{|x\cap y|}{|x\cup y|}
\end{equation*}
The estimate of Jaccard similarity between $x,y$
will be:
\begin{equation*}
\hat{s} = \frac{1}{n} \sum_{i=1}^n I[h_i(x) == h_i(y))]
\end{equation*}
As described earlier in equation~\ref{eq:likelihood}, the likelihood function for
getting $m$ matches out $n$ hashes is a binomial with parameters $n,s$.
Note that $n$ is also a random variable here. Hence we can directly use
our proposed methods for doing inference on $s$.

\subsubsection{Cosine Similarity}
The
locality sensitive hash function for cosine similarity is given by
the \textit{rounding hyperplane algorithm}, developed by Charikar~\cite{charikar02}.
However, the similarity given by the above algorithm is a little different
from cosine similarity. Specifically, such a hash function gives:
\begin{equation*}
P(h(x) == h(y)) = 1 - \frac{\theta}{\pi}
\end{equation*}
where $\theta$ is the angle between the two vectors. Let the above
similarity be defined as $s$ and let the cosine similarity be
$r$. The range of $s$ is therefore, 0.5 to 1.0.
To convert between $s$ and $r$, we need the following transformations:
\begin{eqnarray}
r =  cos(\pi(1-s)) \\
s = 1 - \frac{cos^{-1}(r)}{\pi}
\end{eqnarray}
Consequently, we need to adapt our proposed algorithms to handle
these transformations. Handling the pruning inference is quite
simple. If the user sets the cosine similarity threshold
as $t$, before running our pruning inference, we change the
threshold to the value of the transformed similarity measure. So
the pruning inference becomes $s \geq (1 - \frac{cos^{-1}(t)}{\pi})$
instead of $s \geq t$. 

The transformation of the concentration inference
is trickier. We need to transform the confidence interval of $s$ (
our algorithm will generate this) to the confidence interval of $r$
(for cosine similarity). The user provides an estimation error
bound $\delta$, implying that we need
to generate an estimate $\hat{r}$ within a confidence interval of $2\delta$.
with $1-\gamma$ coverage probability. Since we can only estimate $\hat{s}$,
we need to create a level-$(1-\gamma)$ confidence interval $2\delta_s$ around $\hat{s}$,
such that, if $l_s \leq s \leq u_s$ and $l_r \leq r \leq u_r$
then,
\[
u_s-l_s \leq 2\delta_s \implies u_r - l_r \leq 2\delta
\]
If we create, a $2\delta_s$ fixed-width confidence interval,
then the upper and lower confidence limits will be $\hat{s}+\delta_s$
and $\hat{s}-\delta_s$ respectively. Since $r$ is a monotonicall increasing
function of $s$, hence the upper and lower confidence limit of $r$ (cosine
similarity) will be $cos(\pi(1-min(1.0,\hat{s}+\delta_s)))$ and
$cos(\pi(1-max(0.5,\hat{s}-\delta_s)))$ respectively. The interval for the estimate
of cosine similarity will be $cos(\pi(1-min(1.0,\hat{s}+\delta_s))) - cos(\pi(1-max(0.5,\hat{s}-\delta_s)))$.
Now we have to choose $\delta_s$ such that 
\[
cos(\pi(1-min(1.0,\hat{s}+\delta_s))) - cos(\pi(1-max(0.5,\hat{s}-\delta_s))) \leq 2\delta
\]
The above function is monotonically decreasing in $\hat{s}$. So the interval
will be largest when $\hat{s}$ is smallest (0.5). So we set $\hat{s}$ to 0.5
and numerically find out largest $\delta_s$ such that the inequality
of the above expression is still satisfied.

\section{Experimental Evaluation}
\subsection{Experimental Setup and Datasets}
In terms of setup all results are run on 
a single processor on 
a AMD Opteron 2378 with 2.4GHz cpu speed. The machine
has 32GB of RAM. We only use one core as our application
is a single threaded C++ program. We use
two real world dataset to evaluate the quality and
efficiency of our allpairs similarity search algorithm.
Details are given in Table~\ref{tab:datasets}

\noindent \textbf{Twitter:} This is a graph representing
follower-followee links in Twitter~\cite{kwak10}. Only users
having at least 1000 followers are selected. Each user
is represented as an adjacency list of the users it follows.

\noindent \textbf{WikiWords100K and WikiLinks:} These datasets
are derived from the English Wikipedia Sep 2010 version.
The WikiWords100K is a preprocessed text corpus with
each article containing at least 500 words. The Wikilinks
is a graph of created from the hyperlinks of the entire
set of articles weighted by tf-idf.

\noindent \textbf{RCV:} This is a text dataset consisting of
reuters articles~\cite{lewis04}. 
Each user is represented as a set of words. Some basic
preprocessing such as stop-words removal and stemming
is done.

\noindent \textbf{Orkut:} The Orkut dataset is a friendship
graph of 3 million users weighted by tf-idf~\cite{mislove07}.

\begin{table}
\begin{small}
\begin{tabular}{|c|c|c|c|c|}
\hline
{\bf
Dataset} & {\bf Vectors} & {\bf Dimensions} & {\bf Avg. Len} &
{\bf Nnz}\\
\hline
Twitter & 146,170 & 146,170 & 1369 & 200e6\\
\hline
WikiWords100K & 100,528 & 344,352 & 786 & 79e6\\
\hline
RCV & 804,414 & 47,236 & 76 & 61e6\\
\hline
WikiLinks & 1,815,914 & 1,815,914 & 24 & 44e6\\
\hline
Orkut & 3,072,626 & 3,072,66 & 76 & 233e6\\
\hline

\end{tabular}
\end{small}
\caption{Dataset details}
\label{tab:datasets}
\end{table}

\subsection{Results}
As explained in the methodology section~\ref{sec:mlest},
we expect BayesLSH/Lite variants to be very fast, however those could
potentially suffer a loss in the qualitative guarantees
as they model an inherently sequential process in a non-sequential
manner. Since the sequential confidence interval based methods have
provable guarantees about quality (lemmas~\ref{level-alpha} and~\ref{stopping-points}),
they are always expected to be qualitatively better. The SPRT and hence the
Hybrid methods should qualitatively perform very well, however under
the composite hypothesis testing scenario, SPRT cannot provide strong
guarantees. In the next two sections we will evaluate these premises.

\subsubsection{Algorithms using Early Pruning and Exact Similarity Computation}
We compare the following four strategies for computing
all pairs with similarity above a certain user defined
threshold. All of these algorithms assume that the original
dataset is available (instead of the smaller sketch). These
algorithms use the exact candidate generation
technique AllPairs~\cite{bayardo07} and an early pruning technique
and finally exact similarity computation.

\noindent \textbf{BayesLSHLite:} BayesLSHLite~\cite{satuluri2012bayesian}
is the state-of-the-art candidate pruning algorithm which is known
to perform better than AllPairs~\cite{bayardo07} and PPJoin~\cite{xiao08}.

\noindent \textbf{SPRT:} We use the traditional Sequential Probability
Ratio Test to do the early pruning of candidates. We set $\tau = 0.025$.

\noindent \textbf{One-Sided-CI-HT:} We compare against our model,
which is the fixed-width one-sided upper confidence
interval based hypothesis testing technique. We set $\epsilon=0.01$.
The choice of $\epsilon$ is done by empirically evaluating several values -- we found
values in the neighborhood of $0.01-0.05$ worked best.
We set $a=4$ as it seems to work well in practice~\cite{frey2010fixed}.

\noindent \textbf{Hybrid-HT:} This is the second model we propose,
where based on the candidate in question, we either choose
a One-Sided-CI-HT or SPRT. We set $\mu = 0.18$, that is
the threshold of $w$ below which, our Hybrid-HT algorithm
switches from a One-Sided-CI-HT to SPRT. Again, we selected $\gamma$,
empirically by trying different thresholds.
For all the tests above, we set the Type I error or recall parameter
$1-\alpha = 0.97$. We also compare with the AllPairs algorithm
which uses exact similarity computation right after the candidate
generation step (no candidate pruning).

The performance and quality numbers  are reported
in Figure~\ref{fig:perf-qual}. We measure performance by the
total execution time and we measure quality by recall (since we
are giving probabilistic guarantees about recall).
An added advantage of these methods is
that since we compute exact similarity for all candidates that are retained
and
check whether they are above the threshold using exact similarity computation, all the strategies
yield full precision (100\%). 
Further more, the sequential hypothesis
tests we do are truncated tests, i.e. we compute at most $h=256$
hashes, after which if a decision cannot be made, we send the pair
for exact similarity computation. We report results on all the
aforementioned datasets on both Jaccard and cosine similarity
measures. For Jaccard, we vary the similarity threshold from $0.3-0.7$
and for cosine, we vary the threshold from $0.5-0.9$. These are the
same parametric settings used in the original BayesLSHLite work.

Results indicate that the pattern is quite similar
for all the datasets. BayesLSHLite is always substantially faster in case
of cosine similarity while in case of Jaccard similarity, AllPairs
is marginally faster at times. At high values of
the similarity threshold, SPRT is the slowest, while both One-Sided-CI-HT
and Hybrid-HT performs very close to BayesLSHLite. This performance
benefit comes from the one-sided tests. More precisely, choosing the width $w$
of the test based on the estimate first bin of hash comparisons makes each
test optimized for the specific candidate being processed.
Those tests are extremely efficient
at pruning away false positive candidates whose true similarities are very far
from the similarity threshold $t$. The reason is these tests can allow a larger
confidence width $w$ and hence less number of hash comparisons. Even the Hybrid-HT
performs very well at such high thresholds, because it chooses one of the
one-sided tests at such high threshold. However, at the other end of the
spectrum, at very low similarity thresholds, the allowable confidence interval
width $w$ becomes too small and a large number of trials is required by the
one-sided tests making them inefficient. SPRT performs reasonably well under these
situations. Under these conditions, the Hybrid-HT strategy is able to perform better
than both (SPRT and One-Sided-CI-HT) as it able to smarty delegate pairs with
true similarity close to threshold to
SPRT instead of one-sided tests. 
In summary, the green lines (Hybrid-HT) can perform well through the
whole similarity threshold range. For the WikiWords100K dataset in 
Figure~\ref{fig:wiki100-cos-256} Hybrid-HT gave 8.8x speedup over
AllPairs and 2.1x speedup of SPRT at 0.9 threshold and at 0.5
threshold, it gave 3.4x speedup over AllPairs and 1.3x speedup
over SPRT.

In terms of quality, our proposed method One-Sided-CI-HT guarantee at least 97\%
recall ($\alpha = 0.03$). In all results we see the recall of One-Sided-CI-HT,
as expected,
is above 97\%. Inspite of the fact that SPRT does not have strong guarantees in case of composite hypothesis,
 we see that SPRT performs quite well in all datasets.
Since Hybrid-HT uses the One-Sided-CI-HT and SPRT, its quality numbers are also
extremely good.
Only BayesLSHLite, which does not model the hash comparisons
as a sequential process, falls marginally below the 97\% mark at some places. 
In summary, our tests can provide rigorous
quality guarantees, while significantly improving the performance by
over traditional SPRT.

\begin{figure*}[!htb]
\begin{center}
\includegraphics[width=280pt, height=20pt]{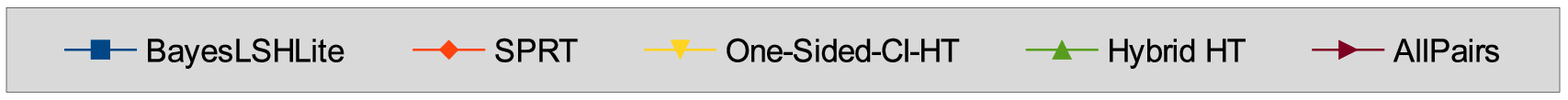}
\end{center}
\begin{center}
\hspace{-0.5in}
\subfigure[Twitter, Jaccard]
{
\includegraphics[width=120pt,height=100pt]{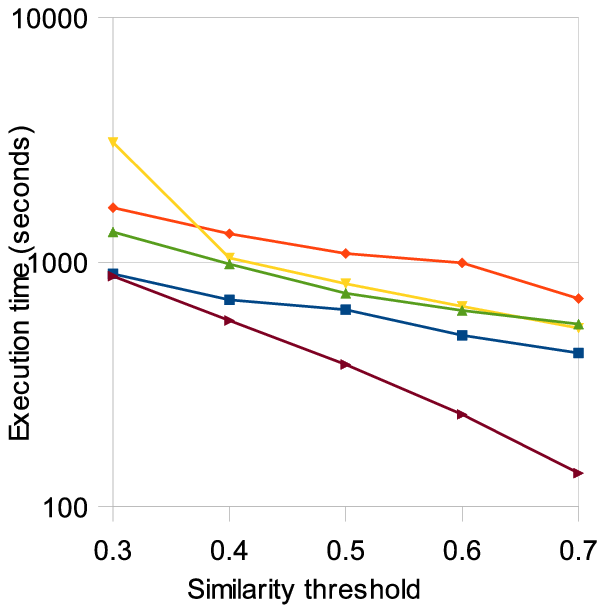}
\label{fig:twit-jac-256}
}
\subfigure[Twitter, Jaccard]
{
\includegraphics[width=120pt,height=100pt]{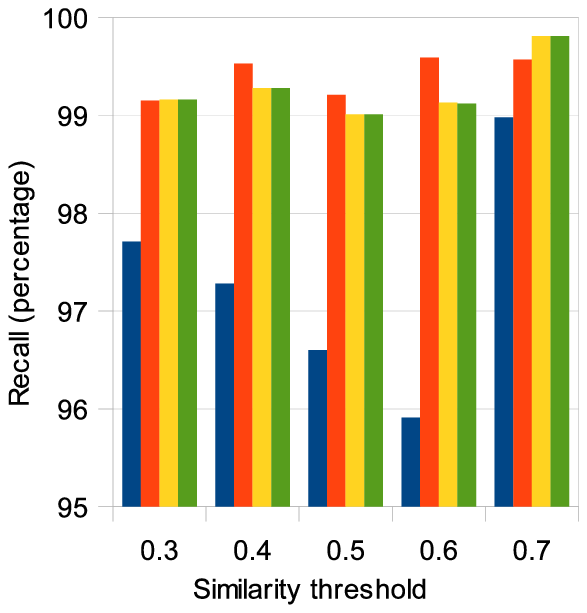}
\label{fig:twitter-jac-256-quality}
}
\subfigure[WikiWords100K,Jaccard]
{
\includegraphics[width=120pt,height=100pt]{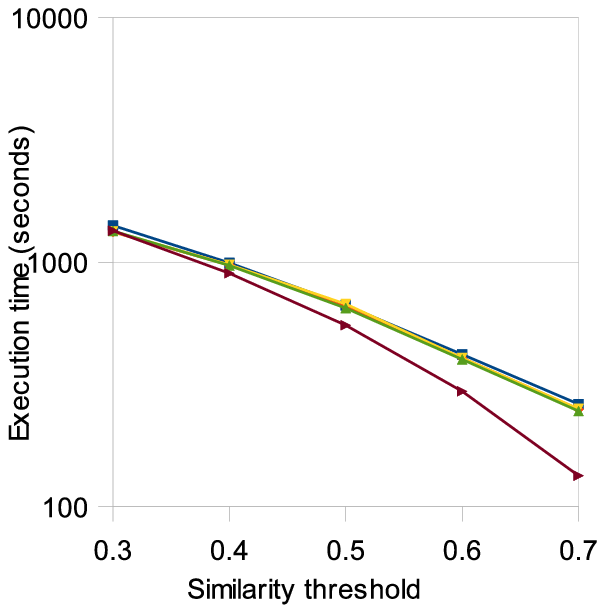}
\label{fig:wiki100-jac-256}
}
\subfigure[WikiWords100K,Jaccard]
{
\includegraphics[width=120pt,height=100pt]{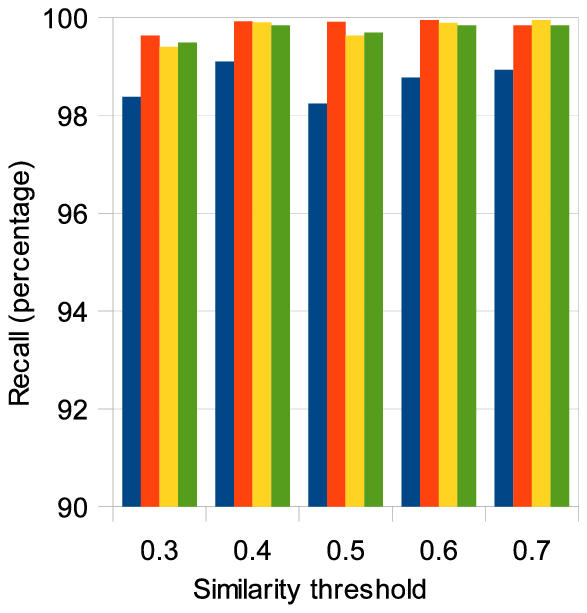}
\label{fig:wiki100-jac-256-quality}
}
\end{center}
\begin{center}
\hspace{-0.5in}
\subfigure[RCV,Jaccard]
{
\includegraphics[width=120pt,height=100pt]{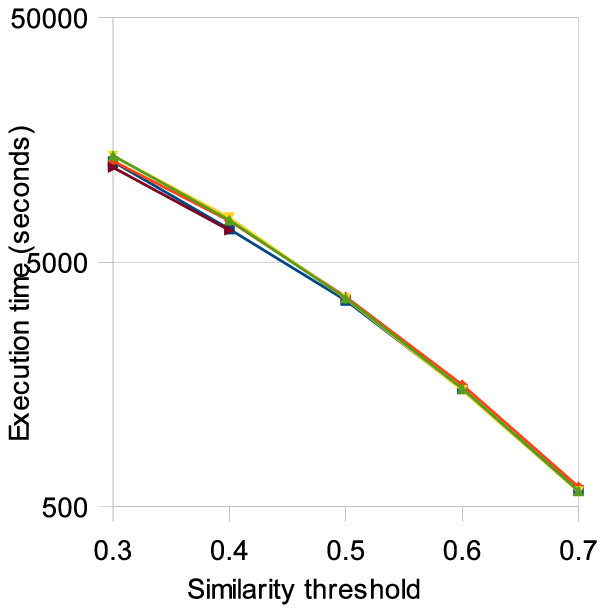}
\label{fig:rcv-jac-256}
}
\subfigure[RCV,Jaccard]
{
\includegraphics[width=120pt,height=100pt]{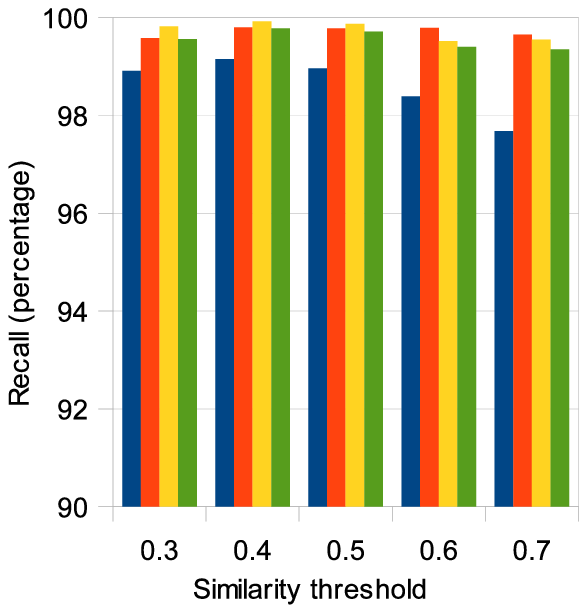}
\label{fig:rcv-jac-256-quality}
}
\subfigure[WikiLinks,Jaccard]
{
\includegraphics[width=120pt,height=100pt]{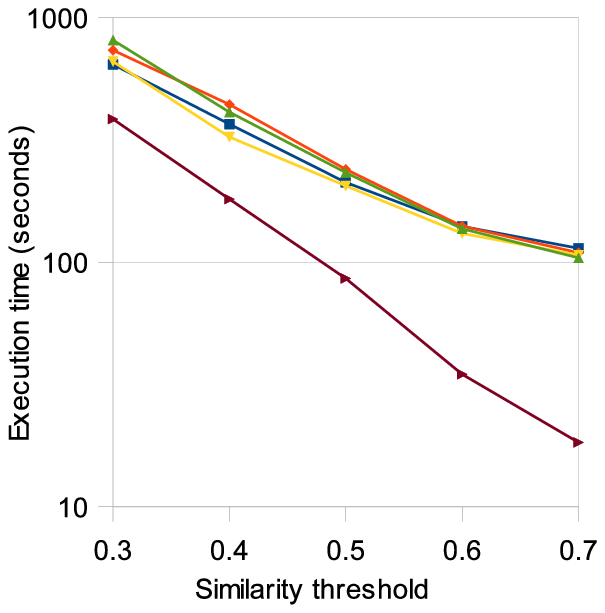}
\label{fig:wikilinks-jac-256}
}
\subfigure[WikiLinks,Jaccard]
{
\includegraphics[width=120pt,height=100pt]{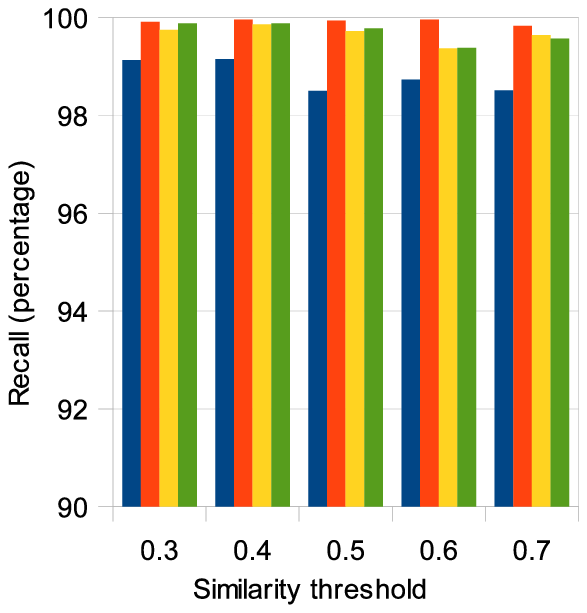}
\label{fig:wikilinks-jac-256-quality}
}
\end{center}

\begin{center}
\hspace{-0.5in}
\subfigure[Twitter,cosine]
{
\includegraphics[width=120pt,height=100pt]{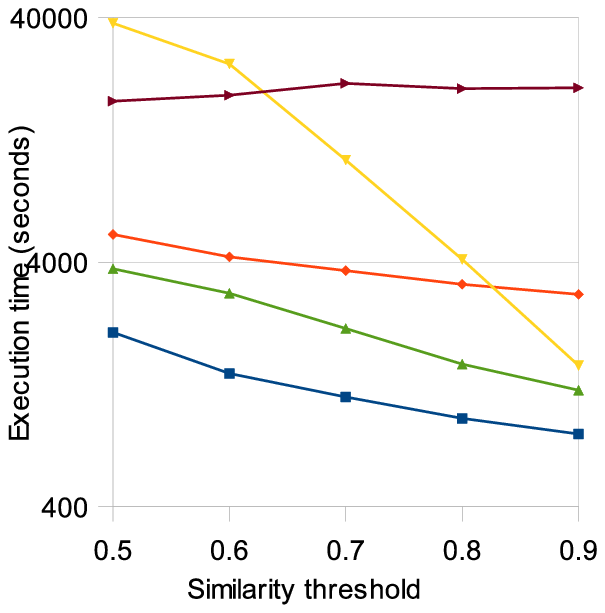}
\label{fig:twit-cos-256}
}
\subfigure[Twitter,cosine]
{
\includegraphics[width=120pt,height=100pt]{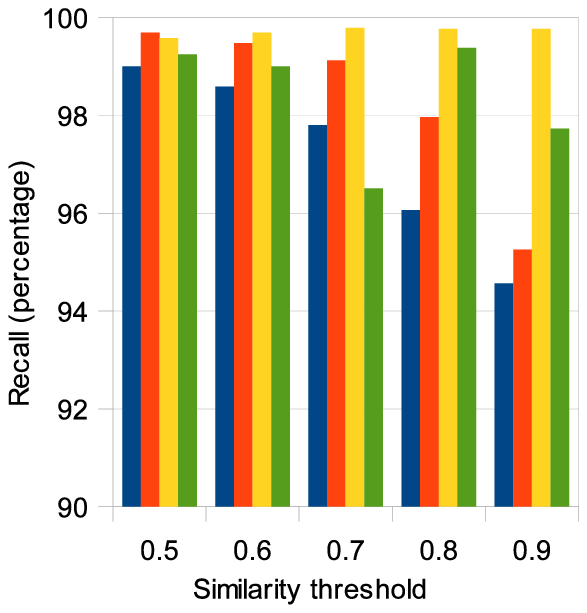}
\label{fig:twitter-cos-256-quality}
}
\subfigure[WikiWords100K,cosine]
{
\includegraphics[width=120pt,height=100pt]{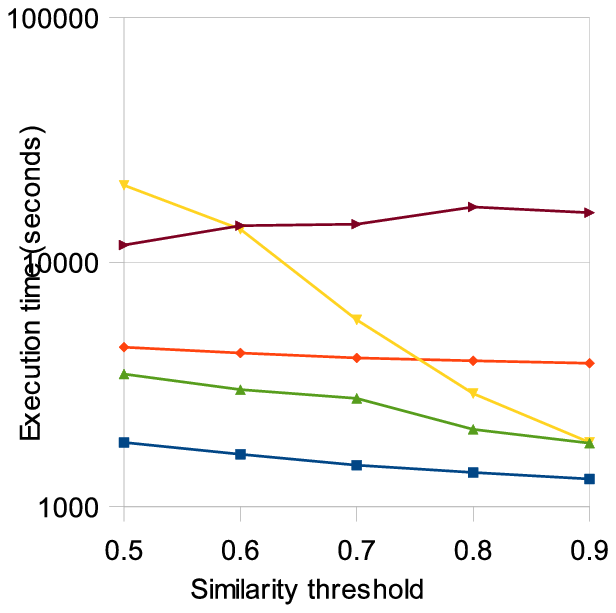}
\label{fig:wiki100-cos-256}
}
\subfigure[WikiWords100K,cosine]
{
\includegraphics[width=120pt,height=100pt]{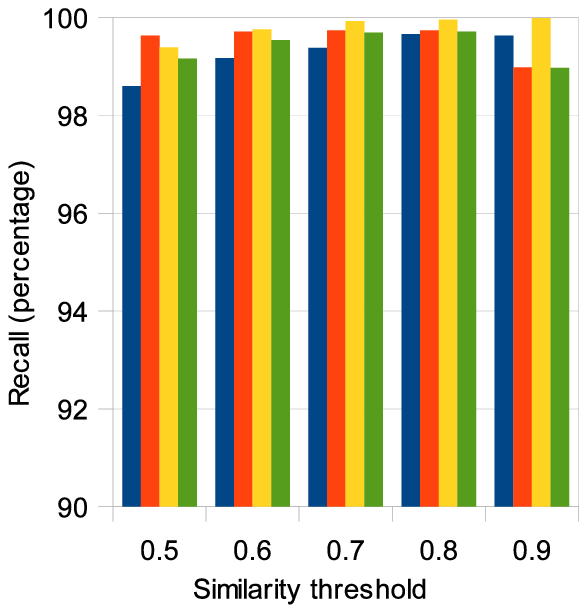}
\label{fig:wiki100-cos-256-quality}
}
\end{center}

\begin{center}
\hspace{-0.5in}
\subfigure[RCV,cosine]
{
\includegraphics[width=120pt,height=100pt]{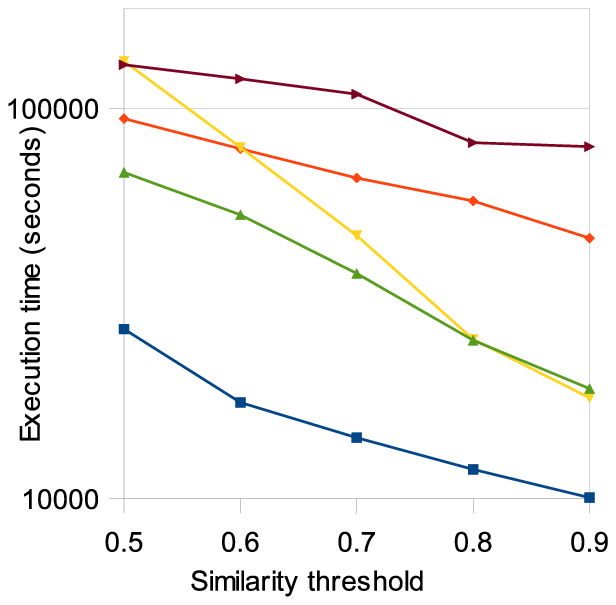}
\label{fig:rcv-cos-256}
}
\subfigure[RCV,cosine]
{
\includegraphics[width=120pt,height=100pt]{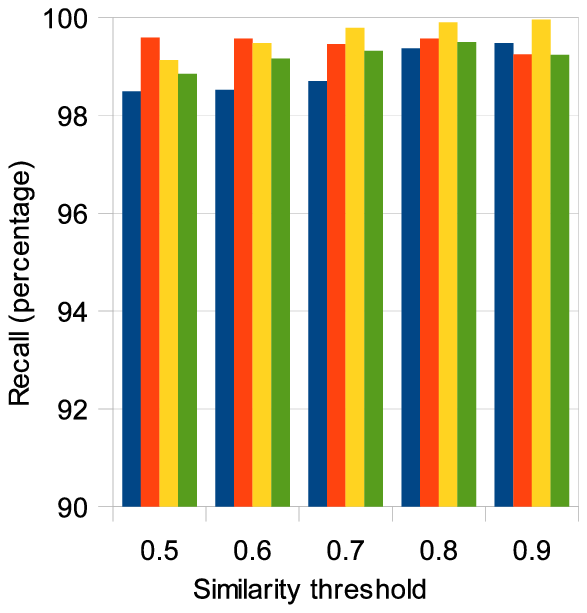}
\label{fig:rcv-cos-256-quality}
}
\subfigure[WikiLinks,cosine]
{
\includegraphics[width=120pt,height=100pt]{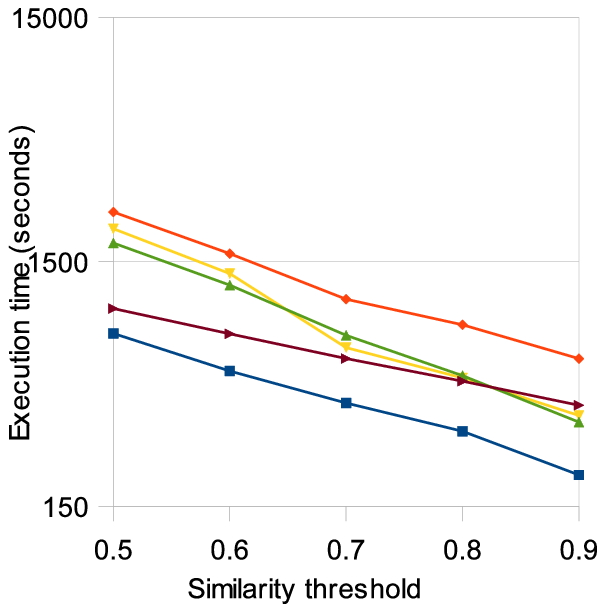}
\label{fig:wikilinks-cos-256}
}
\subfigure[WikiLinks,cosine]
{
\includegraphics[width=120pt,height=100pt]{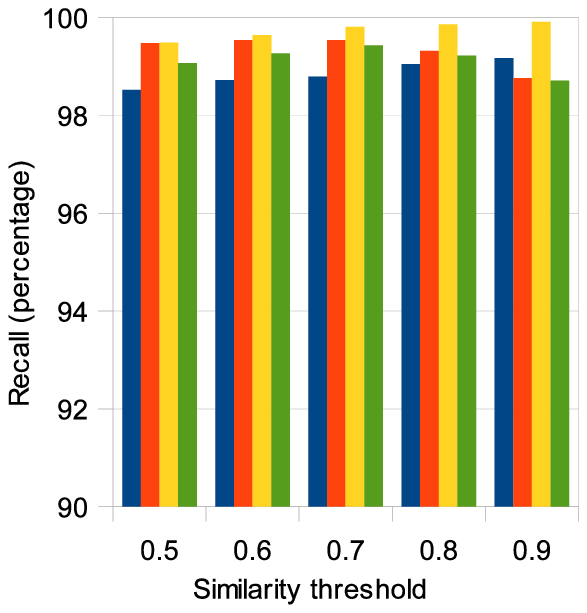}
\label{fig:wikilinks-cos-256-quality}
}
\end{center}

\begin{small}
\caption{Comparisons of algorithms with exact similarity computation.  
}
\label{fig:perf-qual}
\end{small}
\end{figure*}

\subsubsection{Algorithms using Early Pruning and Approximate Similarity Estimation}
The previous section discussed the algorithms which
can be used when the explicit representation of the
original data is available. We now describe results
on two algorithms for which only the hash signatures
needs to be stored rather than the entire dataset.
These algorithms use the LSH index generation followed
by candidate pruning, followed by approximate similarity
estimation. We compare the following two techniques:

\noindent \textbf{BayesLSH:} This uses the same
pruning technique as BayesLSHLite along with
the concentration inference for similarity
estimation.

\noindent \textbf{Hybrid-HT-Approx:} This is our
sequential variant. It uses Hybrid-HT's pruning
technique along with the sequential fixed-width
confidence interval generation strategy as described
in section~\ref{sec:two-sided-ci}. We set $\tau = 0.015$.

We use the same parametric settings as before. The additional
parameters required here are the estimation error bound
$\delta$ and the coverage probability for the confidence
interval $\gamma$. We set $\delta=0.05$ and $\gamma=\alpha$. Again we measure performance
by execution time. Here we measure quality by both recall
and estimation error as we provide probabilistic
guarantees on both.

Figure~\ref{fig:perf-qual-lsh} reports both the performance
and recall numbers. We do not list the estimation error
numbers as the avg. estimation error for each algorithm
on each dataset was within the specified bound of 0.05.
Results indicate that Hybrid-HT-Approx is slower than
BayesLSH as expected, however is qualitatively better
than BayesLSH. More importantly, in all cases,
Hybrid-HT-Approx has a recall value which is well above
the 97\% guaranteed number. 
BayesLSH
on an average performs quite well, however it does fall below
the guaranteed recall value quite a few times. In summary,
our method provides rigorous guarantees of quality without
losing too much performance over BayesLSH.

\begin{figure*}[!htb]
\begin{center}
\includegraphics[width=140pt, height=20pt]{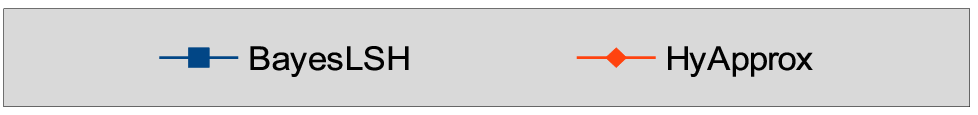}
\end{center}
\begin{center}
\hspace{-0.5in}
\subfigure[Twitter, Jaccard]
{
\includegraphics[width=120pt,height=100pt]{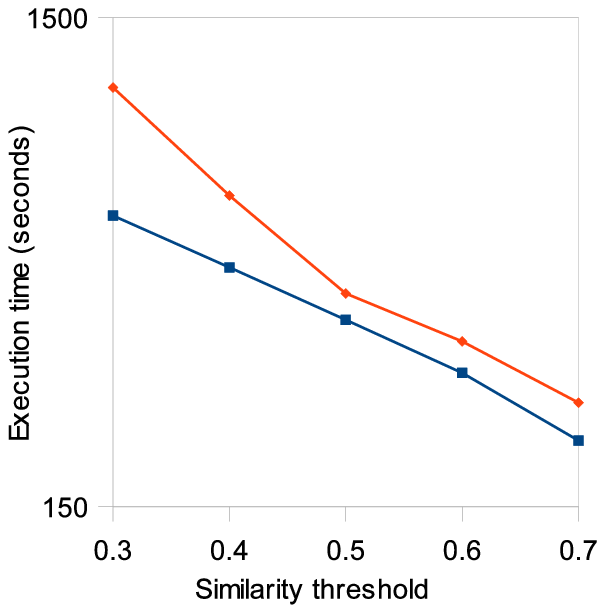}
\label{fig:twit-jac-lsh}
}
\subfigure[Twitter, Jaccard]
{
\includegraphics[width=120pt,height=100pt]{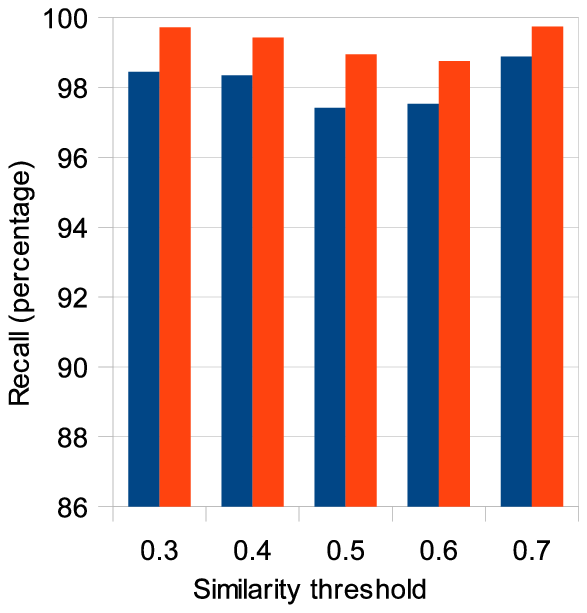}
\label{fig:twitter-jac-lsh-quality}
}
\subfigure[WikiWords100K,Jaccard]
{
\includegraphics[width=120pt,height=100pt]{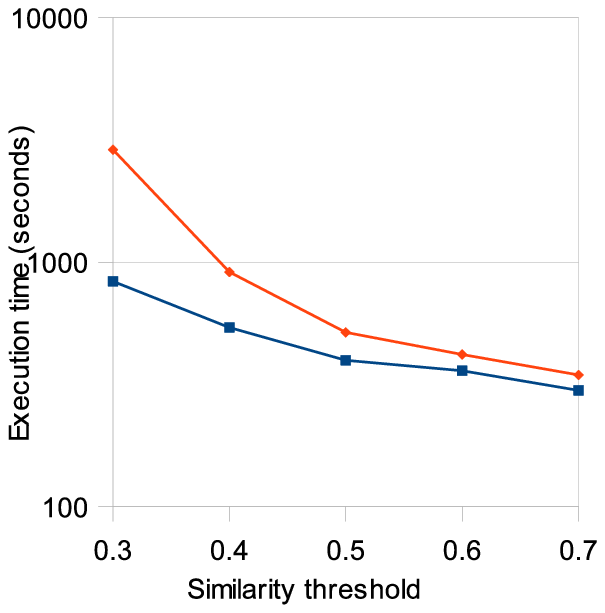}
\label{fig:wiki100-jac-lsh}
}
\subfigure[WikiWords100K,Jaccard]
{
\includegraphics[width=120pt,height=100pt]{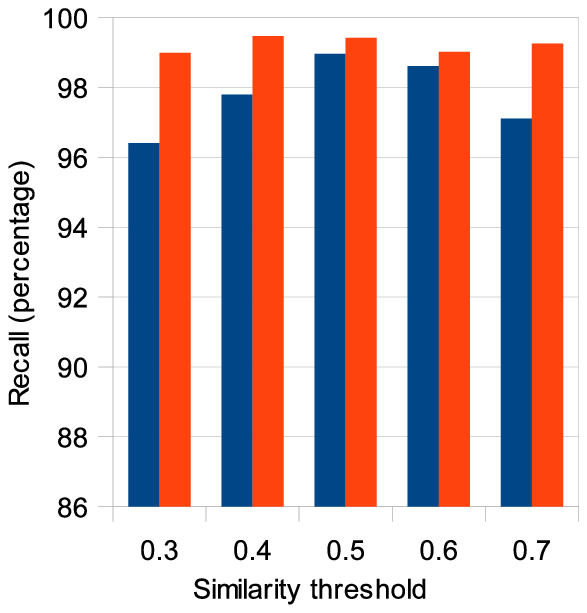}
\label{fig:wiki100-jac-lsh-quality}
}
\end{center}
\begin{center}
\hspace{-0.5in}
\subfigure[RCV,Jaccard]
{
\includegraphics[width=120pt,height=100pt]{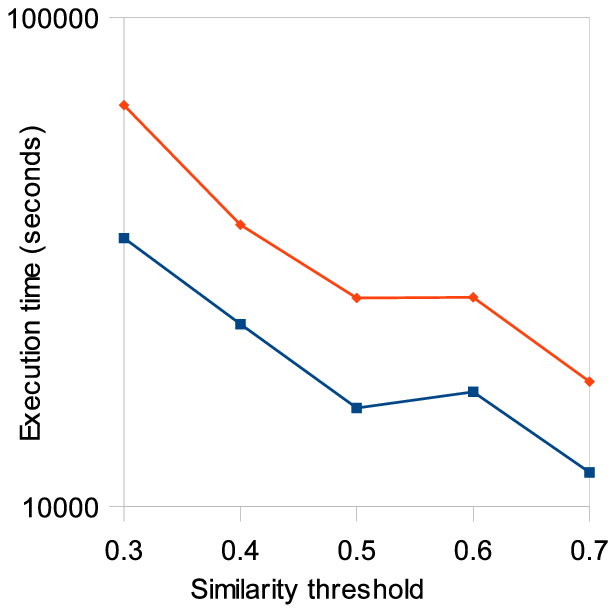}
\label{fig:rcv-jac-lsh}
}
\subfigure[RCV,Jaccard]
{
\includegraphics[width=120pt,height=100pt]{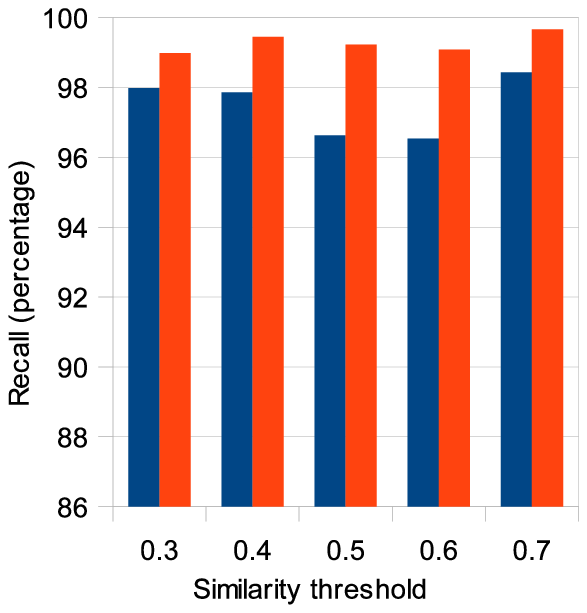}
\label{fig:rcv-jac-lsh-quality}
}
\subfigure[WikiLinks,Jaccard]
{
\includegraphics[width=120pt,height=100pt]{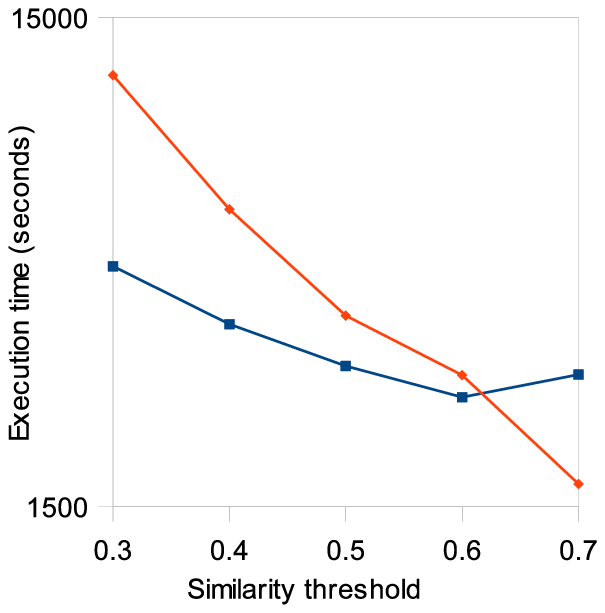}
\label{fig:wikilinks-jac-lsh}
}
\subfigure[WikiLinks,Jaccard]
{
\includegraphics[width=120pt,height=100pt]{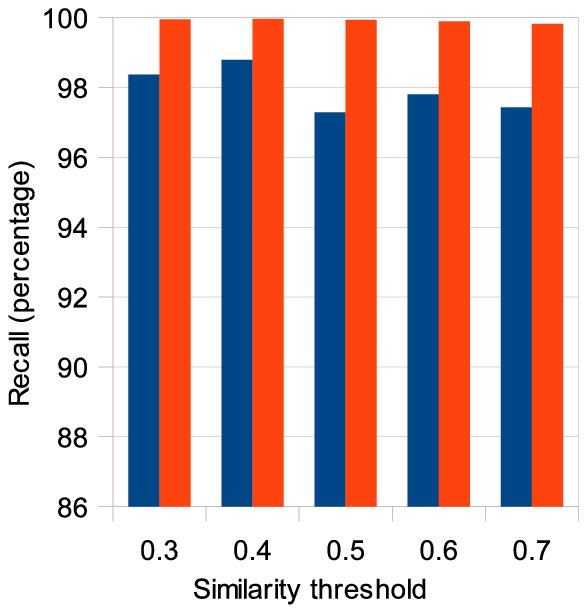}
\label{fig:wikilinks-jac-lsh-quality}
}

\end{center}

\begin{center}
\hspace{-0.5in}
\subfigure[Orkut,Jaccard]
{
\includegraphics[width=120pt,height=100pt]{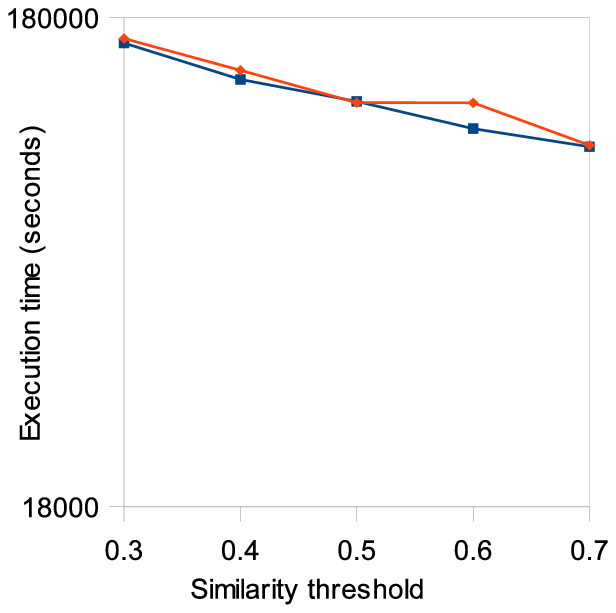}
\label{fig:orkut-jac-lsh}
}
\subfigure[Orkut,Jaccard]
{
\includegraphics[width=120pt,height=100pt]{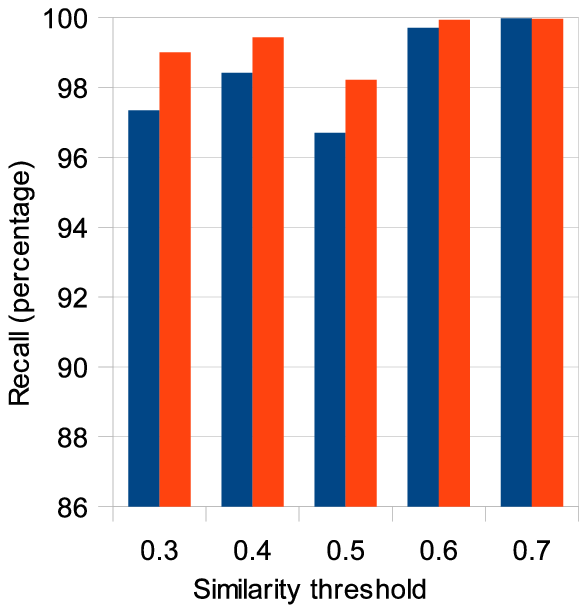}
\label{fig:orkut-jac-lsh-quality}
}
\subfigure[Twitter, cosine]
{
\includegraphics[width=120pt,height=100pt]{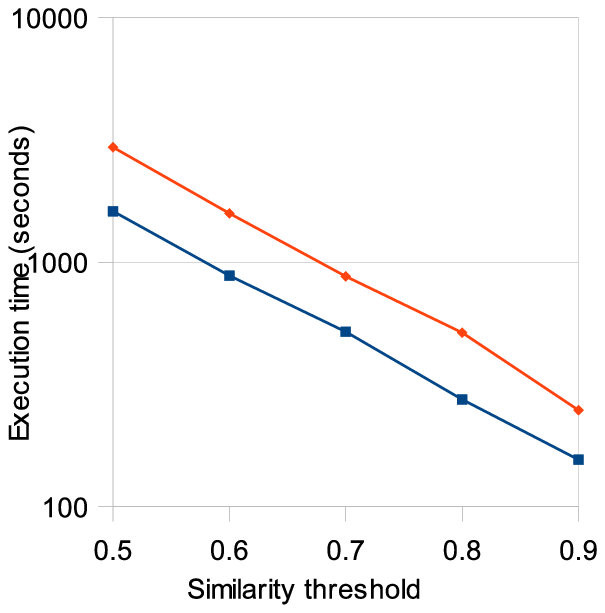}
\label{fig:twit-cos-lsh}
}
\subfigure[Twitter, cosine]
{
\includegraphics[width=120pt,height=100pt]{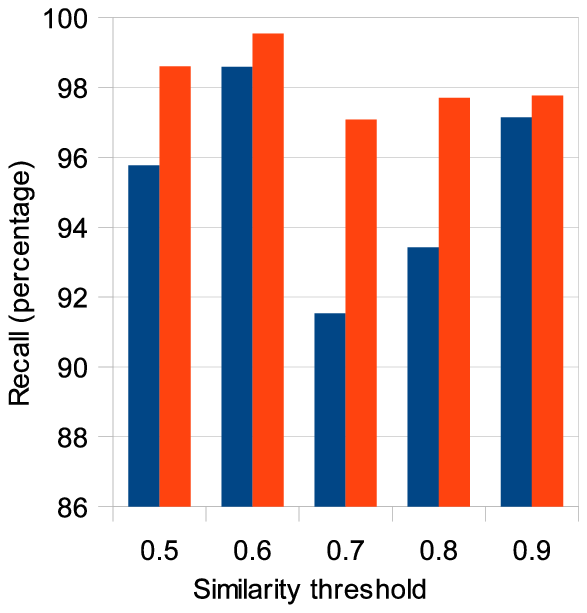}
\label{fig:twitter-cos-lsh-quality}
}
\end{center}

\begin{center}
\hspace{-0.5in}
\subfigure[WikiWords100K,cosine]
{
\includegraphics[width=120pt,height=100pt]{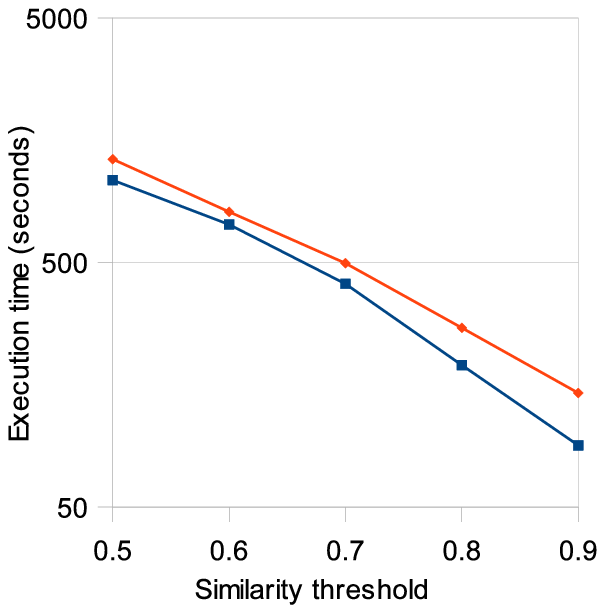}
\label{fig:wiki100-cos-lsh}
}
\subfigure[WikiWords100K,cosine]
{
\includegraphics[width=120pt,height=100pt]{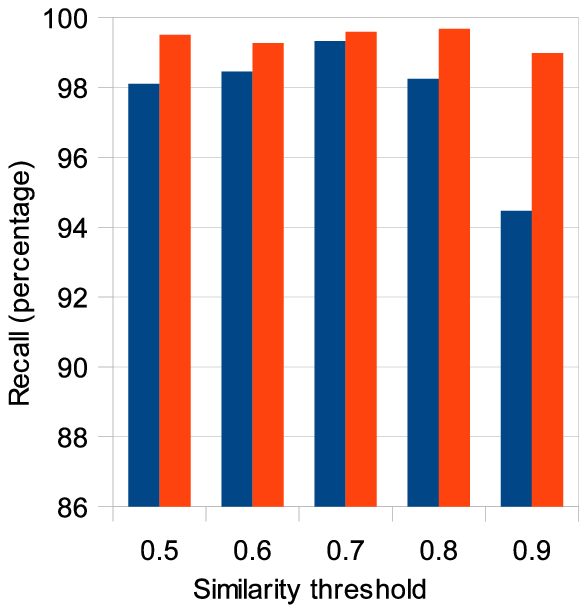}
\label{fig:wiki100-cos-lsh-quality}
}
\subfigure[RCV,cosine]
{
\includegraphics[width=120pt,height=100pt]{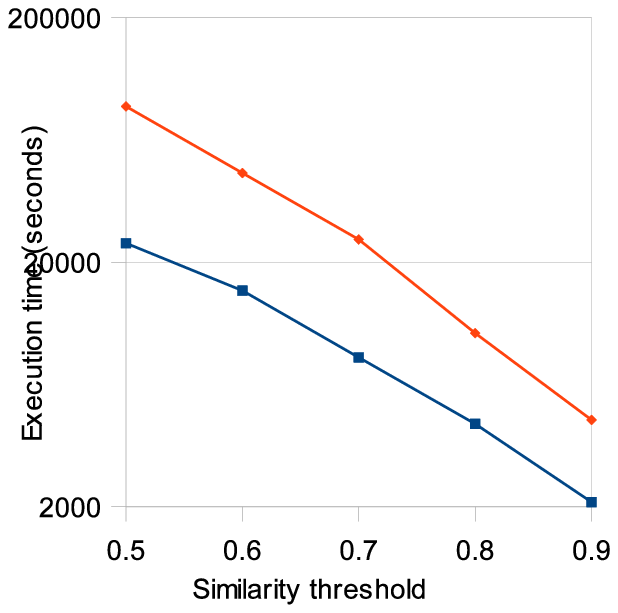}
\label{fig:rcv-cos-lsh}
}
\subfigure[RCV,cosine]
{
\includegraphics[width=120pt,height=100pt]{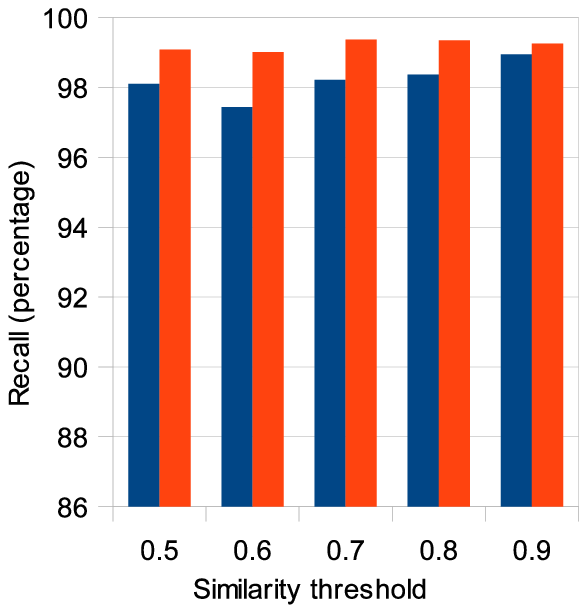}
\label{fig:rcv-cos-lsh-quality}
}
\end{center}
\begin{small}
\caption{Comparisons of algorithms with approximate similarity estimations.
}
\label{fig:perf-qual-lsh}
\end{small}
\end{figure*}

\section{Conclusions}
In this paper we propose 
principled approaches of doing all pairs
similarity search on a database of objects
with a given similarity measure. We describe
algorithms for handling two different scenarios -
i) the original data set is available and
the similarity of interest can be exactly computed from
the explicit representation of the data points and
ii) instead of the original dataset only a small
sketch of the data is available and similarity needs
to be approximately estimated. For both scenarios
we use LSH sketches (specific to the similarity
measure) of the data points. For the first case
we develop a fully principled approach of adaptively
comparing the hash sketches of a pair of points
and do composite hypothesis testing where the hypotheses
are similarity greater than or less than a threshold.
{\it Our key insight is a single test does not perform
well for all similarity values, hence we dynamically
choose a test for a candidate pair, based on a crude
estimate of the similarity of the pair}
. For the second case we additionally
 develop an adaptive algorithm for estimating the
approximate similarity between the pair. Our methods are
based on finding sequential fixed-width confidence
intervals. We compare our methods against state-of-the-art
allpairs similarity search algorithms BayesLSH/Lite that
does not precisely model the adaptive nature of the
problem. We also compare against the more traditional
sequential hypothesis testing technique -- SPRT. We
conclude that if quality guarantee is paramount,
then we need to use our sequential confidence interval
based techniques, and if performance is extremely important,
then BayesLSH/Lite is the obvious choice. Our Hybrid models
gives a very good tradeoff between the two extremes.
We show that our hybrid method always guarantees the minimum-prescribed
quality requirement (as specified by the input parameters), while being upto 2.1x faster than
SPRT and 8.8x faster than AllPairs. Our hybrid method is also improves the recall
by up to 5\% over BayesLSH/Lite, a contemporary state-of-the-art adaptive LSH approach.

\bibliographystyle{abbrv}
\bibliography{paper}

\end{document}